\title{Locating Charging Stations: Connected, Capacitated and Prize- Collecting} 
\titlerunning{Locating Charging Station Problem} 
\author{Rajni Dabas}{Department of Computer Science, University of Delhi, [Delhi-110007], India }{rajni@cs.du.ac.in}{}{Supported by a UGC-JRF}
\author{Neelima Gupta}{Department of Computer Science, University of Delhi, [Delhi-110007], India }{rajni@cs.du.ac.in}{}{}
\authorrunning{R. Dabas and N. Gupta} 
\keywords{Facility Location, Connected Facility Location, Capacitated Facility Location, Prize Collecting Facility Location, Penalties, Lower Bounds} 
\newcommand{\facilityset}{\mathcal{F}}	
\newcommand{\clientset}{\mathcal{C}}			
\newcommand{\cliset}{\mathcal{C}}				
\newcommand{\dist}[2]{c(#1,~#2)}
\newcommand{\concost}[1]{d_{#1}}
\newcommand{\etal}{\textit{et al}.}
\newcommand{\clipen}{\mathcal{C}_p}
\newcommand{\dummyset}{\mathcal{D}}
\newcommand{\Opt}[1]{\mathcal{O}_{#1}}
\newcommand{\edgeset}{\mathcal{F} \times \mathcal{F}}
\newcommand{\CPFL}{CPFL }
\newcommand{\cpfl}{cpfl}
\newcommand{\CFL}{CFL }
\newcommand{\cfl}{cfl}
\newcommand{\ConFL}{ConFL }
\newcommand{\confl}{confl}
\newcommand{\ConCPFL}{ConCPFL }
\newcommand{\concpfl}{concpfl}
\newcommand{\ConPFL}{ConPFL }
\newcommand{\conpfl}{conpfl}
\newcommand{\ConCFL}{ConCFL }
\newcommand{\concfl}{concfl}
\newcommand{\ConkM}{Con$k$M }
\newcommand{\ConkFL}{Con$k$FL }
\newcommand{\ConCkFL}{ConC$k$FL }
\newcommand{\CPkFL}{CP$k$FL }
\newcommand{\ConCkM}{ConC$k$M }
\begin{document}
\maketitle

\begin{abstract}
In this paper, we study locating charging station problem as facility location problem and its variants ($k$-Median, $k$-Facility location and $k$-center). We study the connectivity and the capacity constraints in these problem.

Capacity and connectivity constraints have been studied in the literature separately for all these problems. We give first constant factor approximations when both the constraints are present. Extending/modifying the techniques used for connected variants of the problem to include capacities or for capacitated variants of problem to include connectivity is a tedious and challenging task. In this paper, we combine the two constraints by reducing the problem to underlying well studied problems, solving them as black box and combine the obtained solutions. We also, combine the two constraints in the prize-collection set up. 

In the prize-collecting set up, the problems are not even studied when one of the constraint is present. We present constant factor approximation for them as well.
\end{abstract}

\section{Introduction}
To address the increasing environmental health issues, countries around the globe are planning to phase out combustion engine vehicles. Automobile manufacturers, like Nissan Leaf, Tesla, Mahindra and Tata Motors, are switching to produce electric vehicles. One of the major challenge posed by this shift is to strategically identify the locations to set up the charging stations. Due to the short range of electric vehicles, the existing refueling station model is not sufficient. Governments across the world are trying to address the issue, but, high costs associated with equipment and installation of charging stations at public spaces are currently obstructing the build-out of such a network. So, what we need is a cost effective solution for locating the charging stations. 

The problem of locating charging stations can be formulated as a facility location problem. {\em Facility Location Problem} (FL) is a well known and well studied problem in operations research and computer science~\cite{Shmoys,ChudakS03,Jain:2001,Byrka07,guha1999greedy,KPR,Chudak98,mahdian2001greedy,mahdian_1.52,charikar2005improved,arya,Zhang2007126,Li13}. 
In (uncapacitated) FL, we are given a set of facilities and a set  of clients. Every facility has an associated {\em facility opening cost}. Serving a client from a facility incurs {\em service cost}. We assume that the service cost is a metric.
The goal is to open a subset of facilities so as to minimise the sum of facility opening costs ({\em facility cost}) and the total {\em service cost} of serving all the clients from the opened facilities. In case of charging stations, the facilities are charging stations and the consumers are clients. Another closely related problem is $k$-Median problem ($k$M). In (uncapacitated) $k$M, we are given an upper bound $k$ ({\em cardinality constraint}) on the maximum number of facilities that can be opened in our solution and facility opening costs are $0$ for every facility. $k$-Facility Location is a common generalisation of FL and $k$M where in we have both the facility opening costs and the cardinality constraint. Yet another variant is, $k$-center problem ($k$C) where we wish to minimise the maximum distance of any client from the serving facility.

The problems are NP-hard. Several approximation results are known for these basic problems. For example an elegant $2$ factor approximation was obtained for $k$-Center more than three decades ago by Hochbaum and Shmoys ~\cite{hochbaum1985kC} 
and the factor is known to be tight. Since then most of the work has focused on FL and $k$M problems.  For facility location the gap between the best known approximation ratio (1.488)~\cite{Li} 
and the best known lower bound (1.463)~\cite{guha1999greedy}
is nearly closed, whereas it is yet to be filled for the $k$ median problem with the current best approximation ratio and the best known lower bound being $2.675 + \epsilon$~\cite{ByrkaPRST14} 
and $\approx 1.736$~\cite{jain2002new} 
respectively.

The problems become harder as constraints are added to them.  Constraints come naturally in locating the charging stations. 
One such constraint is the limited number of charging slots at a given station. In the context of FL, the problem with a bound on the maximum number of clients that a facility can serve, is called the capacitated facility location (CFL).  This constraint is notoriously hard to handle. The same has been accepted in the literature by many researchers. For example, An~\etal~\cite{AnCkC} states "there is a large discrepancy in our understanding of uncapacitated and capacitated versions of network location problems". Though the discrepancy is bridged to some extent for problems like CFL and C$k$C (constant factor approximations for the problems have been achieved using local search~\cite{Aggarwal, Bansal} and LP-rounding with preprocessing~\cite{AnCkC} respectively), the results and the techniques that are successful in dealing with capacities are still very limited and no true constant factor approximation is known for C$k$M in the literature. 


Another constraint in the charging station problem arises due to the need to connect the charging stations 
for them to receive electric supply from the power grid. This puts a connectivity constraint on the opened facilities (the charging stations). Also, the connectivity must be acquired at low cost. Thus, this also adds a component of cost to the objective function.

Sometimes a few distant consumers (clients) can increase the cost of solution disproportionately. It is profitable for the company (installing the charging stations) to leave these clients unserved by paying some penalty cost. Another way to think about it is that every client has a prize that can only be collected if it is served. This generalisation of the FL problem is called prize-collecting FL (PFL).

R{\"o}sner and Schimdt~\cite{rosnerICALP} realise the difficulty to adjust individual methods for each added constraint and pose a challenging request to build a solution for the constrained instance using an existing solution for the  lesser constrained one, as a black box, satisfying at least one more constraint.
In other words, they reduce the constrained instance of a problem to a lesser constrained one and use the solution of the latter as a  black-box, to obtain the solution for the original problem. 
They provide a framework to add privacy in variants of clustering with different constraints. In this paper, we provide a framework to add connectivity to variants of charging station problem with different constraints. We also provide a framework to add penalties to some variants of the problem. In particular we study the following problems:

\begin{itemize}
    \item Connected Capacitated Facility Location (ConCFL)
    \item Connected Capacitated $k$-Median (ConC$k$M)
    \item Connected Capacitated $k$-Facility Location (ConC$k$FL)
    \item Connected Capacitated $k$-Center (ConC$k$C)
    \item Connected Prize Collecting Facility Location (ConPFL)
    \item Capacitated Prize Collecting Facility Location (CPFL)
    \item Connected Capacitated Prize Collecting Facility Location (ConCPFL)
     \item Connected Capacitated Prize Collecting $k$-Median (ConCP$k$M)
      \item Connected Capacitated Prize Collecting $k$-Facility Location (ConCP$k$FL)
\end{itemize}

\begin{figure}
    \centering
    \includegraphics[width=12cm]{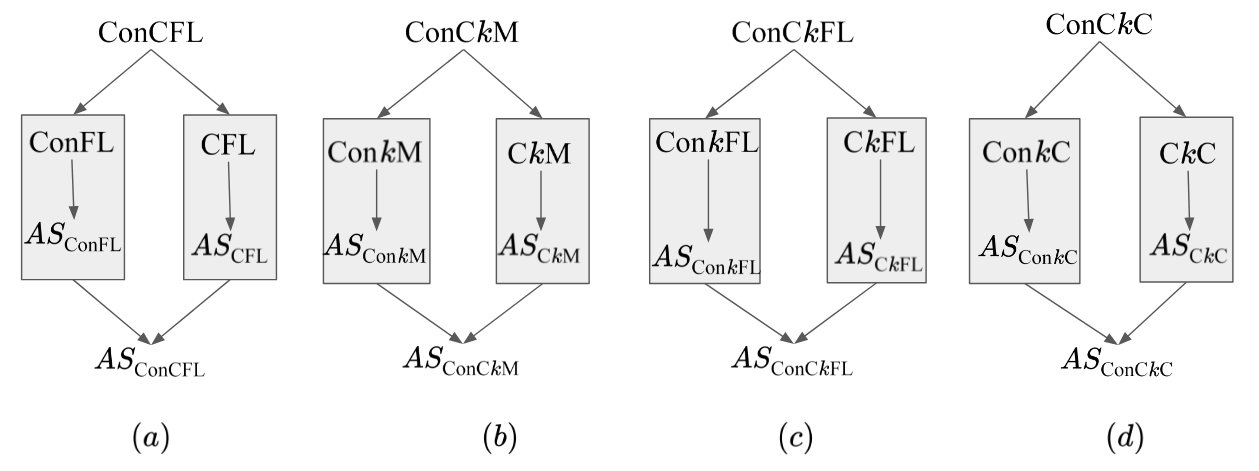}
    \caption{Combining connectivity and capacities. Using ConFL instead of Con$k$M and Con$k$FL as black-box in (b) and (c) will also work.}
   \label{figConC}
\end{figure}

\begin{figure}
    \centering
    \includegraphics[width=12cm]{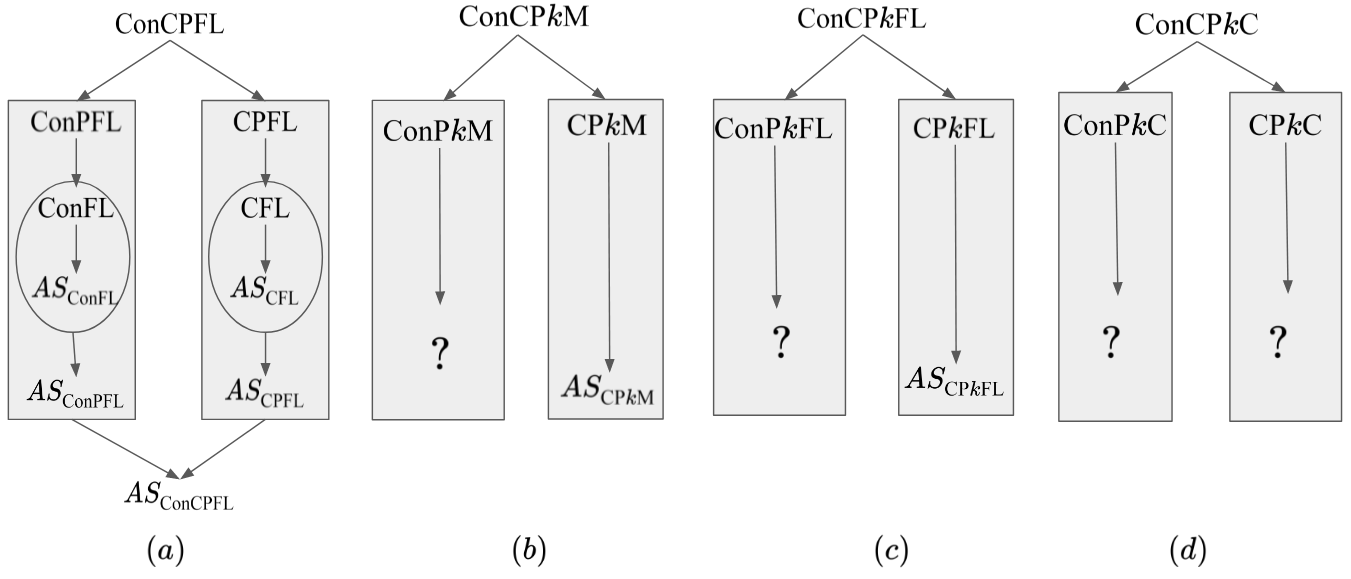}
  \caption{Combining connectivity, capacities and penalties.}
    \label{figConCP}
\end{figure}

\begin{figure}
    \centering
   \includegraphics[width=7cm]{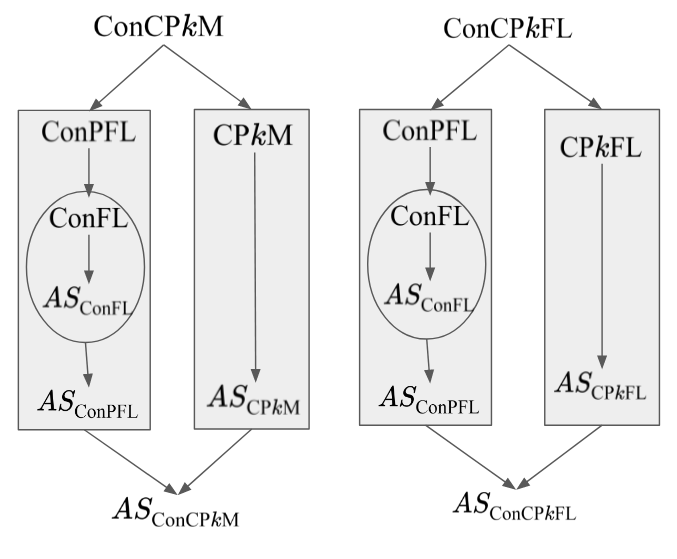}
    \caption{Combining connectivity, capacities and penalties. Using ConPFL instead of ConP$k$M and ConP$k$FL as black-box.}
    \label{figConCP2}
\end{figure}

\noindent 
The constraints, capacities and connectivity, have been studied separately for all the basic problems (FL, $k$M, $k$FL and $k$C). Extending/modifying the techniques used for connected variants of the problem to include capacities or for capacitated variants of problem to include connectivity is a tedious and challenging task. In this paper, we combine the two constraints using reduction to underlying well studied problems. Instead of extending and modifying the solutions to the underlying problems, we use them as black box and combine the obtained solutions. We also, combine the two constraints in the prize-collection set up.
Figures~\ref{figConC} and~\ref{figConCP} give a broad idea of our plan. Figure $1$ depicts the reduction of Connected - Capacitated variant of the problems to two problems, each with only one constraint, either Connectivity or Capacities.  Figure $2$ depicts the reduction of Connected - Capacitated - Prize-Collecting variant of the problems to two problems:
Connected - Prize Collecting and Capacitated - Prize Collecting variants of the problem. Figure $2$ also depicts the reduction of Connected/Capacitated - Prize Collecting variant to Connected/Capacitated variant of FL. The last reduction does not work in presence of the cardinality constraint as depicted in Figure~\ref{figConCP} (b) and (c) i.e., we are not able to reduce ConP$k$FL/ConP$k$M to Con$k$FL/Con$k$M. However, we are able to solve ConCP$k$FL/ConCP$k$M by reducing them to ConPFL instead, as shown in Figure~\ref{figConCP2}. In particular, we present the following results:

\begin{theorem}
\label{thm_ConCFL}
Given an $\alpha$-factor approximation for for ConFL and a $\beta$-factor approximation for CFL/C$k$M/C$k$FL with $\gamma$-factor violation in capacity/cardinality, a $(\alpha+2\beta)$-factor approximation for ConCFL/ConC$k$M/ConC$k$FL preserving the violations in capacities/cardinality can be obtained in polynomial time.
\end{theorem}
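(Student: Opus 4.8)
The plan is to reduce the given instance $I$ of \ConCFL{} (respectively \ConCkM, \ConCkFL) to its two ``single‑constraint'' relaxations, solve each with the promised black box, and then glue the two solutions so that the \emph{capacitated} solution supplies the client assignment --- and hence the capacity/cardinality guarantee --- while the \emph{connected} solution supplies the backbone Steiner tree. Write $\Opt{I}$ for the optimal cost of $I$. First, let $I_{\mathrm{con}}$ be $I$ with the capacity constraint deleted (and, for the $k$‑variants, also the cardinality constraint deleted), an instance of \ConFL{} --- one may equivalently keep the cardinality bound and use a connected $k$‑median/$k$‑facility‑location approximation, cf.\ Figure~\ref{figConC} --- and let $I_{\mathrm{cap}}$ be $I$ with the connectivity requirement deleted, an instance of \CFL/\CkM/\CkFL. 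An optimal solution of $I$ is feasible for each relaxation, so $\Opt{I_{\mathrm{con}}}\le\Opt{I}$ and $\Opt{I_{\mathrm{cap}}}\le\Opt{I}$. Running the $\alpha$‑approximation on $I_{\mathrm{con}}$ yields $S_1=(F_1,T_1,\sigma_1)$ --- opened facilities $F_1$, a Steiner tree $T_1$ connecting $F_1$ to the root, and an assignment $\sigma_1$ --- of cost $FC_1+TC_1+SC_1\le\alpha\,\Opt{I}$, where $FC_1,TC_1,SC_1$ denote its facility, connection and service costs; running the $\beta$‑approximation on $I_{\mathrm{cap}}$ yields $S_2=(F_2,\sigma_2)$ with the claimed $\gamma$‑violation, of cost $FC_2+SC_2\le\beta\,\Opt{I}$. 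Delete from $F_2$ any facility serving no client under $\sigma_2$ (this only helps, and preserves the cardinality violation).

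To build the output, open the facilities $F_2$ with the assignment $\sigma_2$; this keeps the capacity/cardinality violation of the black box verbatim (each opened facility still serves at most $\gamma u$ clients, and $|F_2|$ is at most $\gamma k$ for the $k$‑variants). For connectivity, output the tree $T_1$ together with one edge $(i,\phi(i))$ for every $i\in F_2$, where $\phi(i)$ is a facility of $F_1$ closest to $i$. The facilities of $F_1$ serve only as Steiner nodes of the output tree; the resulting graph connects all of $F_2$ to the root, so the output is feasible for $I$.

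For the cost, the facility and service costs of the output are exactly $FC_2$ and $SC_2$, and the connection cost is at most $TC_1+\sum_{i\in F_2}\dist{i}{\phi(i)}$. To bound the extra term, for each $i\in F_2$ pick a client $j_i$ served by $i$ under $\sigma_2$; since every client is served by a unique facility, the $j_i$ are pairwise distinct, and since $\phi(i)$ is the $F_1$‑facility closest to $i$ while $\sigma_1(j_i)\in F_1$, the triangle inequality gives $\dist{i}{\phi(i)}\le\dist{i}{\sigma_1(j_i)}\le\dist{i}{j_i}+\dist{j_i}{\sigma_1(j_i)}$. Summing over $i$ and using distinctness of the $j_i$, $\sum_{i\in F_2}\dist{i}{\phi(i)}\le SC_2+SC_1$, so the connection cost of the output is at most $TC_1+SC_1+SC_2$. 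Hence the total cost of the output is at most
\[
FC_2 + SC_2 + \bigl(TC_1+SC_1+SC_2\bigr) \;\le\; 2\bigl(FC_2+SC_2\bigr) + \bigl(FC_1+TC_1+SC_1\bigr) \;\le\; (2\beta+\alpha)\,\Opt{I},
\]
and the whole construction is polynomial since it is two black‑box calls plus a nearest‑neighbour computation.

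The one genuinely substantive step is the witness‑client argument in the cost analysis: routing each opened capacitated facility to its nearest facility in the connected solution and charging the detour to a client common to the two assignments, once against each of $SC_1$ and $SC_2$, while verifying that these charges never collide (this is exactly where discarding empty $F_2$‑facilities and the uniqueness of $\sigma_2$‑assignments are used). The remaining care point is the tree‑cost model: the argument as written treats the connection cost as the length of a Steiner tree in the service metric, and if \ConFL{} is defined with a separate connection metric or with tree length scaled by a parameter, one must route the new edges in that metric and re‑verify the same triangle inequalities; this does not change the ratio but must be checked against whatever precise model the \ConFL{} black box assumes.
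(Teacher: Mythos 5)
Your proposal is correct and follows essentially the same route as the paper: solve the two single-constraint relaxations as black boxes, keep the capacitated solution's openings and assignments, and hook each opened facility into the ConFL Steiner tree through a witness client it serves, charging the detour once to each service cost via the triangle inequality to get $\alpha+2\beta$. Your explicit verification that the witness clients are distinct (and that empty facilities can be discarded) is the same ``wlog'' step the paper makes, just spelled out more carefully.
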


\begin{theorem}
\label{thm_ConCkC}
Given an $\alpha$-factor approximation for C$k$C and a $\beta$-factor approximation for Con$k$C, 
a $(2\alpha+\beta)$-factor approximation for ConC$k$C can be obtained in polynomial time.
\end{theorem}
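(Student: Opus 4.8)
The plan is to run the two black boxes once each and then graft the capacitated solution onto the connected one through the metric. Since C$k$C and Con$k$C are threshold problems, I would first guess the optimal radius $\tau^{*}$ of the ConC$k$C instance — one of the polynomially many distinct client--facility distances — run both black boxes for each guess, and keep the cheapest feasible solution produced; so assume the guess equals the true optimum. Dropping connectivity, the ConC$k$C optimum is feasible for C$k$C with radius $\tau^{*}$, and dropping capacities it is feasible for Con$k$C with radius $\tau^{*}$. Hence the C$k$C box returns $(\sset1,\phi_1)$ with $|\sset1|\le k$, capacities satisfied, and $c(j,\phi_1(j))\le\alpha\tau^{*}$ for every client $j$, and the Con$k$C box returns $(\sset2,\phi_2)$ with $|\sset2|\le k$, $\sset2$ connected, and $c(j,\phi_2(j))\le\beta\tau^{*}$ for every $j$.

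I would build the returned solution on the backbone $\sset2$: it is already connected and has at most $k$ facilities, so the connectivity and cardinality requirements of ConC$k$C are met, and only the client assignment has to be redone so that the capacities $u$ hold and the radius stays small. The governing metric fact is that for every client $j$, with $i=\phi_1(j)\in\sset1$ and $i'=\phi_2(j)\in\sset2$,
\[
c(i,i')\ \le\ c(i,j)+c(j,i')\ \le\ \alpha\tau^{*}+\beta\tau^{*},
\]
so each facility $i$ of the capacitated solution lies within $(\alpha+\beta)\tau^{*}$ of $\sset2$; if the entire C$k$C-cluster $\phi_1^{-1}(i)$ is routed to such a nearby facility of $\sset2$, every one of its clients travels at most $c(j,i)+c(i,i')\le(2\alpha+\beta)\tau^{*}$, the claimed bound. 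The rerouting itself I would realise as a transportation / $b$-matching computation: the clients, grouped by C$k$C-cluster (each group of size $\le u$), are supplies, the facilities of $\sset2$ are sinks of capacity $u$, and cluster $\phi_1^{-1}(i)$ is joined to $i'\in\sset2$ whenever $c(i,i')\le(\alpha+\beta)\tau^{*}$; any integral feasible flow yields a capacity-respecting assignment of radius $\le(2\alpha+\beta)\tau^{*}$ and is found in polynomial time.

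The main obstacle is the feasibility of this transportation instance, i.e.\ a Hall-type condition: every family $L$ of C$k$C-clusters must have total size at most $u\cdot|N(L)|$, where $N(L)$ is the set of facilities of $\sset2$ within $(\alpha+\beta)\tau^{*}$ of at least one cluster centre in $L$. I would derive this from the fact that the ConC$k$C optimum is \emph{itself} capacity-feasible: its centres partition the clients into groups of size $\le u$, each optimal centre lies within $(\alpha+1)\tau^{*}$ of the corresponding C$k$C centre and within $(1+\beta)\tau^{*}$ of $\sset2$, and $n\le ku$, so after (if necessary) padding $\sset2$ up to $k$ facilities — harmless for the radius — the demand meeting any set of sinks can be charged against their capacity. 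Turning this correspondence into a clean proof of the Hall condition, possibly invoking structural properties of the black-box solutions, is where I expect essentially all of the effort to go; the remaining points (polynomiality of the guessing loop, simultaneity of the triangle-inequality bounds over all clients, and inheritance of connectivity by the returned solution) are routine.
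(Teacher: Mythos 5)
Your overall scaffolding --- guessing the optimal radius $\tau^{*}$, running the two black boxes on the relaxed instances, and relating the two solutions through a commonly served client via $c(i,i')\le c(i,j)+c(j,i')\le(\alpha+\beta)\tau^{*}$ --- is exactly the paper's combining technique, and those steps are sound. The gap is in the direction of the combination. You discard the C$k$C centres and reroute every cluster onto the backbone $\sset2$, which obliges you to show that $\sset2$, with its capacities, can absorb all clients within radius $(2\alpha+\beta)\tau^{*}$. The Hall condition you need for this is not implied by anything the black boxes guarantee and can genuinely fail: the Con$k$C box is uncapacitated, so a single facility $i'\in\sset2$ may serve, in the Con$k$C solution, a representative client of every C$k$C cluster, in which case every cluster is routed to $i'$ and its load can reach $ku$; the remaining facilities of $\sset2$ need not lie within $O(1)\cdot\tau^{*}$ of any of these clients (they may be present only as connectivity relays, reachable only along a chain of up to $k-1$ tree edges), so no capacity-feasible assignment onto $\sset2$ of radius $(2\alpha+\beta)\tau^{*}$ need exist at all. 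Your padding remark does not address this, and the feasibility of the ConC$k$C optimum only certifies that \emph{some} set of $k$ connected centres works, not that the particular set $\sset2$ returned by the black box does. As you yourself anticipate, this is where all the effort lies, and along the route you sketch it cannot be completed.

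The paper combines in the opposite direction, which removes the problem rather than solving it: it keeps the openings \emph{and the assignment} of the capacitated solution --- so capacities, cardinality and the coverage radius $\alpha\tau^{*}$ are inherited verbatim and no reassignment (hence no Hall-type condition) is ever needed --- and uses the Con$k$C solution only to supply the connecting structure, attaching each retained centre $i\in\sset1$ to it by the edge $(i,i')$ of length at most $(\alpha+\beta)\tau^{*}$ given by your own triangle inequality; the stated bound $(2\alpha+\beta)\tau^{*}$ is then read off from these same two inequalities. You should restructure your argument accordingly: the transportation/$b$-matching step should be deleted, not repaired.
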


\begin{theorem}
\label{thm_ConCPFL}
Given an $\alpha$-factor approximation for ConPFL and a $\beta$-factor  approximation for CPFL/CP$k$M/\CPkFL with $\gamma$-factor violation in capacities, a $(\alpha+2\beta)$-approximation for ConCPFL/ConCP$k$M/ConCPkFL preserving the violations in capacities/cardinality can be obtained in polynomial time.
\end{theorem}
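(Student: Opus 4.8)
The plan is to run the two black boxes on the given instance and glue their outputs together exactly as in Theorem~\ref{thm_ConCFL}, the one genuinely new difficulty being that the penalty sets returned by the two routines need not coincide. Dropping the capacity bound and (for the $k$-variants) the cardinality bound from the instance $I$ makes it a \ConPFL instance, on which the $\alpha$-approximation returns $S_1=(F_1,T_1,\sigma_1,P_1)$: an open set $F_1$, a Steiner tree $T_1$ spanning $F_1$, an assignment $\sigma_1$ of the unpenalised clients to $F_1$, and a penalty set $P_1$. Since the optimum of \ConCPFL (resp.\ ConCP$k$M, \ConCPkFL) is feasible for this relaxation with unchanged objective, $\mathrm{cost}(S_1)=f(F_1)+c(T_1)+\mathrm{sc}(\sigma_1)+\mathrm{pc}(P_1)\le\alpha\cdot OPT$. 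Dropping connectivity from $I$ makes it a \CPFL (resp.\ CP$k$M, \CPkFL) instance, on which the $\beta$-approximation returns $S_2=(F_2,\sigma_2,P_2)$ with the stated $\gamma$-violation of capacity/cardinality and $\mathrm{cost}(S_2)=f(F_2)+\mathrm{sc}(\sigma_2)+\mathrm{pc}(P_2)\le\beta\cdot OPT$, since removing connectivity only deletes the nonnegative tree term.

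Then I would build the combined solution. Put $C_i:=\sigma_2^{-1}(i)$ and let $F_2':=\{\,i\in F_2:C_i\setminus P_1\neq\emptyset\,\}$, the $\sigma_2$-facilities still serving a client that $S_1$ does not penalise. Open exactly $F_2'$, so the capacity and cardinality violations are those of $S_2$ verbatim; assign each unpenalised client to its $\sigma_2$-facility; and penalise $P_2$ together with all clients of $F_2\setminus F_2'$, which lie in $P_1$ by construction, so the penalty paid is at most $\mathrm{pc}(P_1)+\mathrm{pc}(P_2)$. For connectivity, keep the whole tree $T_1$ --- using the facilities of $F_1$ only as Steiner points and never opening them, which is what makes this legitimate even under a cardinality bound --- and for each $i\in F_2'$ add the edge to its nearest facility $\phi(i)\in F_1$, yielding a single tree spanning $F_2'$ of length $c(T_1)+\sum_{i\in F_2'}\dist{i}{\phi(i)}$.

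For the accounting, the facility cost is $f(F_2')\le f(F_2)$ and the service cost is at most $\mathrm{sc}(\sigma_2)$. For each $i\in F_2'$ pick a private witness $j_i\in C_i\setminus P_1$ (the $C_i$ are disjoint, so the $j_i$ are distinct); since $j_i$ is served in both solutions, $\dist{i}{\phi(i)}\le\dist{i}{j_i}+\dist{j_i}{F_1}\le\dist{i}{\sigma_2(j_i)}+\dist{j_i}{\sigma_1(j_i)}$, hence $\sum_{i\in F_2'}\dist{i}{\phi(i)}\le\mathrm{sc}(\sigma_2)+\mathrm{sc}(\sigma_1)$. Collecting everything, the combined cost is at most $\bigl(f(F_1)+c(T_1)+\mathrm{sc}(\sigma_1)+\mathrm{pc}(P_1)\bigr)+\bigl(f(F_2)+\mathrm{sc}(\sigma_2)+\mathrm{pc}(P_2)\bigr)+\mathrm{sc}(\sigma_2)=\mathrm{cost}(S_1)+\mathrm{cost}(S_2)+\mathrm{sc}(\sigma_2)\le(\alpha+2\beta)\,OPT$, using $\mathrm{sc}(\sigma_2)\le\mathrm{cost}(S_2)\le\beta\,OPT$.

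The step I expect to be the main obstacle is the one absent from Theorem~\ref{thm_ConCFL}: reconciling $P_1$ and $P_2$. The definition of $F_2'$ is precisely what resolves it, since it simultaneously supplies, for every opened facility, a witness whose $S_1$-service cost pays for wiring that facility into the backbone, and lets the demand of the discarded facilities be charged to $S_1$'s penalty budget rather than left uncovered. A final pass should confirm that the degenerate cases --- facilities serving no client, and clients penalised in one solution but served in the other --- do not disturb any of these counts, and that the argument is unchanged when the $k$-variants use \ConPFL (with unbounded $|F_1|$) as the connected black box, exactly because $F_1$ is never opened.
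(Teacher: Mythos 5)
Your proposal is correct and follows essentially the same route as the paper: reduce to \ConPFL and \CPFL, resolve the mismatch of penalty sets by (in effect) penalising every client penalised in either solution, open only those \CPFL facilities still serving a surviving client, and wire each such facility into the \ConPFL tree through a witness client served in both solutions, giving $\mathrm{cost}(S_1)+2\,\mathrm{cost}(S_2)\le(\alpha+2\beta)\,OPT$. Your explicit treatment of $F_2'$, the distinct witnesses, and the observation that $F_1$ is used only as Steiner points (so the cardinality bound is untouched) just spells out details the paper leaves implicit.
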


\begin{theorem}
\label{thm_ConPFL}
Given an $\alpha$-factor approximation for ConFL (using LP optimal to lower bound the cost of the optimal), a $2\alpha$-factor approximation for ConPFL can be obtained in polynomial time.
\end{theorem}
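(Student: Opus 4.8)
The plan is to round the natural LP relaxation of the \ConPFL instance and then make a single call to the \ConFL black box on the clients that the LP prefers to serve. Let $\mathrm{LP}$ denote this relaxation, with the usual variables $y_i$ for opening facility $i$, $x_{ij}$ for serving client $j$ from $i$, $z_j\in[0,1]$ for paying the penalty $\pi_j$ of client $j$, and the standard Steiner/flow variables that connect the opened facilities to the root; the covering constraint is $\sum_i x_{ij}+z_j\ge 1$ for every client $j$, together with $x_{ij}\le y_i$, and the objective is $F+C+T+P$, the facility, service, connectivity and penalty parts respectively. First I would compute an optimal fractional solution $(x^*,y^*,z^*,\cdot)$, so that $F^*+C^*+T^*+P^*=\mathrm{LP}^*\le\mathrm{OPT}$, where $\mathrm{OPT}$ is the optimal \ConPFL cost.

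Next I would split the clients by their penalty mass: let $\clipen=\{\,j: z^*_j\ge 1/2\,\}$ and $\clirem=\{\,j: z^*_j<1/2\,\}$. We pay the penalty outright for every client in $\clipen$; since $\pi_j\le 2\pi_j z^*_j$ there, the total penalty paid is at most $2\sum_j\pi_j z^*_j=2P^*$. Then I would set up a \ConFL instance with the same facilities, costs and root but client set $\clirem$. Each $j\in\clirem$ has $\sum_i x^*_{ij}\ge 1-z^*_j>1/2$, so scaling $(y^*,x^*)$ and the connectivity variables by $2$ (truncating each $y_i$ to $1$ where needed, which only lowers cost and still leaves $x_{ij}\le y_i$ and $\sum_i x_{ij}\ge 1$ valid) yields a feasible fractional solution of the \ConFL LP on $\clirem$ of cost at most $2(F^*+C^*+T^*)$; hence the \ConFL LP optimum for this instance is at most $2(F^*+C^*+T^*)$.

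Now I would invoke the given $\alpha$-approximation for \ConFL on $\clirem$. By hypothesis its output is bounded by $\alpha$ times the \ConFL LP optimum, so we obtain an integral \ConFL solution --- a set of open facilities, a connecting tree, and an assignment of all of $\clirem$ --- of cost at most $2\alpha(F^*+C^*+T^*)$. Returning this solution together with the decision to penalize exactly the clients of $\clipen$ is a feasible \ConPFL solution, and its total cost is at most $2\alpha(F^*+C^*+T^*)+2P^*\le 2\alpha(F^*+C^*+T^*)+2\alpha P^*=2\alpha\,\mathrm{LP}^*\le 2\alpha\,\mathrm{OPT}$, where the first inequality uses $\alpha\ge 1$. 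Every step is a single LP solve or a single black-box call, so the algorithm runs in polynomial time.

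The step I expect to be the main obstacle is verifying the connectivity half of the reduction cleanly: one has to confirm that doubling the fractional Steiner/flow variables keeps them feasible for precisely the \ConFL LP against which the black box is analyzed (in particular that truncating $y^*$ at $1$ does not break the routing of connection flow, which it does not, since it only decreases the demand of each facility), and that the connectivity cost of the scaled solution is genuinely at most $2T^*$. The covering/assignment scaling and the penalty accounting are routine; and it is exactly the hypothesis that the \ConFL approximation is measured against its LP optimum --- not against the integral optimum --- that makes the argument close, since on the residual instance $\clirem$ we only hand the black box a fractional certificate of value $2(F^*+C^*+T^*)$.
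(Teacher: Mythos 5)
Your proposal is correct and follows essentially the same route as the paper: threshold the clients with $z^*_j\ge 1/2$ and pay their penalties (losing a factor $2$ on the penalty term), scale the remaining fractional solution by $2$ to obtain a feasible fractional certificate for the residual ConFL LP, and invoke the LP-based $\alpha$-approximation, giving $2\alpha\cdot LP^*\le 2\alpha\cdot\mathrm{OPT}$. The only cosmetic difference is that the paper normalizes $x^*_{ij}$ by $\sum_i x^*_{ij}$ rather than scaling uniformly by $2$; both yield the same bound.
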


\begin{theorem}
\label{thm_CPFL}
Given an $\beta$-factor approximation for CFL, an $\beta$-factor approximation for CPFL can be obtained in polynomial time.
\end{theorem}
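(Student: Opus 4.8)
The plan is to reduce a \CPFL instance to a \CFL instance by turning each client's penalty into a dedicated facility, invoke the given $\beta$-approximation for \CFL as a black box, and then translate its output back. Given the \CPFL instance on clients $\clientset$ and facilities $\facilityset$ with metric $c$, opening costs $f$, capacities $u$, demands $\Delta_j$ and penalties $\price{j}$, I would build a \CFL instance on the same client set with facility set $\facilityset' = \facilityset \cup \{i_j : j \in \clientset\}$, where the new facility $i_j$ is co-located with $j$, has opening cost $f'_{i_j} = \price{j}$ and capacity $u'_{i_j} = \Delta_j$, while the original facilities keep their costs and capacities. Since $i_j$ is at the same point as $j$, the induced metric gives $c'(i_j,j) = 0$ and $c'(i_j,j') = \dist{j}{j'}$ for $j'\ne j$, so metricity is preserved and the instance is only polynomially larger; hence the black box applies.

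For the analysis I would first show $\mathrm{OPT}_{\mathrm{CFL}'} \le \mathrm{OPT}_{\mathrm{CPFL}}$. From any feasible \CPFL solution that opens $S \subseteq \facilityset$, serves some clients, and pays penalties for a set $\clipen$, open $S \cup \{i_j : j \in \clipen\}$ in the \CFL instance, keep the assignment of the served clients, and assign each $j \in \clipen$ to $i_j$. Opening $i_j$ costs exactly $\price{j}$, serving $j$ from it costs $0$, and its load is $\Delta_j = u'_{i_j}$, so the \CFL solution is feasible and has the same total cost; this proves the inequality.

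The reverse translation carries the real work. Let $S'$ be the open set of the \CFL solution returned by the black box. I would first normalise it so that every open dummy serves only its own client: if some open dummy $i_{j'}$ serves a client $j \ne j'$ (so $j'$ must be served by a different open facility $g$, since otherwise $i_{j'}$ would already be saturated by $j'$ alone), reassign $j'$ to $i_{j'}$ and move the at most $\Delta_{j'}$ units of other demand at $i_{j'}$ onto $g$, which has just been vacated of $j'$'s $\Delta_{j'}$ units; by the triangle inequality the extra cost incurred at $g$ is at most $\Delta_{j'}\cdot\dist{g}{j'}$, exactly the cost saved by rerouting $j'$, so the total cost does not increase and no capacity bound is newly exceeded. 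Since $i_{j'}$ is now saturated by $j'$, it is never touched again, so after at most $|\clientset|$ such swaps every open dummy serves only its owner. The \CPFL solution that opens $S' \cap \facilityset$, penalises every $j$ whose dummy $i_j$ is open, and keeps all remaining assignments (which now go to facilities of $\facilityset$) is then feasible with cost equal to that of the normalised \CFL solution, hence at most $\beta\,\mathrm{OPT}_{\mathrm{CFL}'} \le \beta\,\mathrm{OPT}_{\mathrm{CPFL}}$, giving the claimed $\beta$-approximation.

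The crux is the normalisation step: without it, deleting an open dummy while charging its owner's penalty would orphan the other clients it was serving and there would be no valid \CPFL solution to output; the swap above fixes this at no cost, and since it only redistributes demand that a relocated client vacated, any capacity violation the black box is allowed to incur is transferred to the \CPFL solution unchanged. Everything else is routine bookkeeping.
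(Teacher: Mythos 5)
Your proposal is correct and follows essentially the same route as the paper: a collocated dummy facility per client with opening cost equal to the penalty and capacity equal to that client's demand (the paper uses unit demands, hence capacity $1$), the observation that the optimal CPFL solution induces a CFL solution of no greater cost, and the same cost-neutral swap to ensure each open dummy serves only its own client before translating back. Your normalisation step is just a more explicit version of the paper's ``wlog'' exchange argument, so there is nothing substantive to add.
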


To the best of our knowledge, our results are the first approximations for ConC$k$FL, ConC$k$M, ConC$k$C, ConCPFL, ConCP$k$M, ConCP$k$FL and ConPFL. The only known result for ConCFL in the literature is by Friggstad~\etal~\cite{FriggstadConLUFL2016}. They gave a constant factor approximation for lower and (uniform) upper bounded ConFL violating both the upper and the lower bounds. We cannot get rid of the  violations in the upper bounds even when there are no lower bounds using their technique as they use LP rounding and the LP is known to have an unbounded integrality gap. We give the first true constant factor approximation for the problem. The only result known for CPFL is by Gupta and Gupta~\cite{GuptaG14} using local search technique. They give $(5.83 + \epsilon)$ factor for the case of uniform capacities
and $(8.532 + \epsilon)$ factor for non-uniform capacities. Our result is interesting as it is much simpler and uses the underlying problem CFL as a black box without increasing the cost. We also improve the factor to $5$. The factor will improve, if the factor for the underlying problem (CFL) improves in future. 

Tables~\ref{tab_results1} and \ref{tab_results2} summarise the results we obtain by plugging in the best results known for the underlying problems. Approximation guarantees will improve if better solutions are obtained for the underlying problems.

\begin{table}[ht]
\small
    \centering
    \begin{tabular}{|c  c| c | c  c |}
        \hline
         
         \multicolumn{2}{|c|}{\textbf{Problem}}  &  \textbf{Sub-Problem 1} & \multicolumn{2}{c|}{\textbf{Sub-Problem 2}}      \\ 
         
         
        Factor & Violation& Factor & Factor & Violation\\
         
         \hline
         
         \multicolumn{2}{|c|}{\textbf{(U)ConCFL}} & \textbf{ConFL~\cite{ConFLGrandoni2011}} & \multicolumn{2}{c|}{\textbf{(U)CFL~\cite{Aggarwal}}} \\
          $9.19$ & Nil & $3.19$ & $3$ & Nil \\
        \hline
         
         \multicolumn{2}{|c|}{\textbf{(NU)ConCFL}}  & \textbf{ConFL~\cite{ConFLGrandoni2011}} & \multicolumn{2}{c|}{\textbf{(NU)CFL~\cite{Bansal}}} \\
         $13.19$ & Nil & $3.19$ & $5$ & Nil   \\
        \hline
         
         \multicolumn{2}{|c|}{\textbf{(U)ConC$k$M}} & \textbf{ConFL~\cite{ConFLGrandoni2011}} & \multicolumn{2}{c|}{\textbf{(U)C$k$M~\cite{ByrkaRybicki2015}}} \\ 
        $O(1/\epsilon)$  & $(1+\epsilon)u$  & $3.19$ & $ O(1/\epsilon)$ & $(1+\epsilon)u$  \\
        \hline 
         
         \multicolumn{2}{|c|}{\textbf{(NU)ConC$k$M}} & \textbf{ConFL~\cite{ConFLGrandoni2011}} & \multicolumn{2}{c|}{\textbf{(NU)C$k$M~\cite{Demirci2016}}}\\
         $O(1/\epsilon)$  & $(1+\epsilon)u_i$ & $3.19$ & $O(1/\epsilon)$ & $(1+\epsilon)u$ \\
         \hline 
         
         \multicolumn{2}{|c|}{\textbf{(U)ConC$k$M}} & \textbf{ConFL~\cite{ConFLGrandoni2011}} & \multicolumn{2}{c|}{\textbf{(NU)C$k$M~\cite{capkmshili2014}}}\\
         $O(1/\epsilon)$  & $(1+\epsilon)k$ & $3.19$ & $O(1/\epsilon)$ & $(1+\epsilon)k$ \\
         \hline
         
         \multicolumn{2}{|c|}{\textbf{(NU)ConC$k$M}} & \textbf{ConFL~\cite{ConFLGrandoni2011}} & \multicolumn{2}{c|}{\textbf{(NU)C$k$M~\cite{Lisoda2016}}}\\
         $O(1/\epsilon)$  & $(1+\epsilon)k$ & $3.19$ & $O(1/\epsilon)$ & $(1+\epsilon)k$ \\
         \hline
         
         \multicolumn{2}{|c|}{\textbf{(U)ConC$k$FL}} & \textbf{ConFL~\cite{ConFLGrandoni2011}} & \multicolumn{2}{c|}{\textbf{(U)C$k$FL~\cite{capkmByrkaFRS2013}}}\\
         $O(1/\epsilon^2)$ & $(2+\epsilon)u$ & $3.19$ & $O(1/\epsilon^2)$ & $(2+\epsilon)u$ \\
         \hline
         
        \multicolumn{2}{|c|}{ \textbf{(U)ConC$k$C}}  & \textbf{Con$k$C~\cite{ge2008ConkC}} & \multicolumn{2}{c|}{\textbf{(U)C$k$C~\cite{KhullerSussmanCkC}}}\\
        $18$ & Nil & $6$ & $6$ & Nil   \\
        \hline
        
        \multicolumn{2}{|c|}{\textbf{(NU)ConC$k$C}} & \textbf{Con$k$C~\cite{ge2008ConkC}} & \multicolumn{2}{c|}{\textbf{(NU)C$k$C~\cite{AnCkC}}}\\
       $24$ & Nil & $6$ &  $9$ & Nil  \\
         \hline
    \end{tabular}
\caption{Results corresponding to Theorems~\ref{thm_ConCFL} and~\ref{thm_ConCkC} based on current best approximation for the underlying problem. (U-Uniform, NU-Non-Uniform)}
    \label{tab_results1}
\end{table}

\begin{table}[ht]
\small
    \centering
    \begin{tabular}{|c  c| c | c  c |}
        \hline
         
         \multicolumn{2}{|c|}{\textbf{\footnotesize{Problem}}}  &  \textbf{\small{Sub-Problem 1}} & \multicolumn{2}{c|}{\textbf{\small{Sub-Problem 2}}}      \\ 
         
         
        Factor & Violation& Factor & Factor & Violation\\
         \hline

         \multicolumn{2}{|c|}{\textbf{(U/NU)ConCPFL}} & \textbf{ConPFL~[This Paper]} & \multicolumn{2}{c|}{\textbf{(U/NU)CPFL~[This Paper]}} \\
         $31.32$ & Nil & $21.32$ & $5$ & Nil \\
         \hline

         
         
         \multicolumn{2}{|c|}{\textbf{(U)ConCP$k$M}} & \textbf{ConPFL~[This Paper]} & \multicolumn{2}{c|}{\textbf{(U)CP$k$M~\cite{DabasCFLPO}}} \\
         
        $O(1/\epsilon^2)$ & $(2+\epsilon)u$ & $21.32$ &  $O(1/\epsilon^2)$ & $(2+\epsilon)u$  \\
         \hline

         \multicolumn{2}{|c|}{\textbf{(U)ConCP$k$FL}} & \textbf{ConPFL~[This Paper]} & \multicolumn{2}{c|}{\textbf{(U)CP$k$FL~\cite{DabasCFLPO}}} \\
         
        $O(1/\epsilon^2)$ & $(2+\epsilon)u$ & $21.32$  & $O(1/\epsilon^2)$ & $(2+\epsilon)u$ \\
         \hline
         
         \multicolumn{2}{|c|}{\textbf{ConPFL}} &  \textbf{ConFL~\cite{ConFLGupta2001}} & \multicolumn{2}{c|}{-} \\
         
         $21.32$ & Nil & $10.66$ & \multicolumn{2}{c|}{-} \\
        \hline
        
        \multicolumn{2}{|c|}{\textbf{(U/NU)CPFL}} & - & \multicolumn{2}{c|}{\textbf{(NU)CFL~\cite{Bansal}}} \\
        
        $5$& Nil & -& $5$ & Nil  \\
         \hline
    
    \end{tabular}
\caption{Results corresponding to Theorems~\ref{thm_ConCPFL},~\ref{thm_ConPFL} and~\ref{thm_CPFL} based on current best approximation for the underlying problem. (U-Uniform, NU-Non-Uniform) }
    \label{tab_results2}
\end{table}

\subsection{Our Techniques}
\noindent \textbf{The Combining Technique:} In ConCX/ConCPX problems (where X is FL, $k$M, $k$FL or $k$C as applicable), we reduce our problem to two sub-problems in a way that the connectivity constraint move to one sub problem (ConFL/ConPFL) and the capacity constraints to another (CX/CPX). We use the openings and the assignments of the solution of CPX and connect them using the connectivity of the solution of ConFL via clients.

\noindent \textbf{ConPFL:} We reduce an instance of ConPFL to an instance of ConFL using LP rounding and thresholding techniques. A client paying penalty to an extent of at least half in LP optimal is removed by paying full penalty. Assignment of the remaining clients is raised so that they are served to full extent. The openings are raised accordingly. Note that the reduction does not work in presence of cardinality constraints as increasing the opening of facilities can violate the cardinality constraint by a factor of $2$.

\noindent \textbf{CPFL:} We reduce an instance of CPFL to an instance of CFL by creating a dummy facility, with capacity $1$, collocated with every client.  The facility opening cost of a dummy facility is equal to the penalty cost of the respective collocated client.

\subsection{Previous and Related Work}
\textbf{Capaciated variants of the problem (CFL, C$k$M, C$k$FL and C$k$C)}: Shmoys \etal~\cite{Shmoys} gave the first constant factor($7$) approximation when the capacities are uniform, with a capacity blow-up of $7/2$, by rounding the solution to standard LP. Grover \etal~\cite{GroverGKP18} 
reduced the capacity violation to $(1 + \epsilon)$ thereby showing that the capacity violation can be reduced to arbitrarily close to $0$ by rounding a solution to standard LP. An \etal~\cite{Anfocs2014} gave the first true constant factor(288) approximation for the problem (non-uniform), by strengthening the standard LP. Local search has been successful in dealing with capacities, several results~\cite{KPR, ChudakW99, mathp, paltree, mahdian_universal, zhangchenye} have been successfully obtained using local search with the current best being $5$-factor for non-uniform~\cite{Bansal} and $3$-factor for uniform capacities~\cite{Aggarwal}. No true approximation is known for C$k$M, till date. Constant factor approximations~\cite{Charikar,Charikar:1999,capkmshanfeili2014,ChuzhoyR05,capkmByrkaFRS2013,aardal2013,GroverGKP18} are known, that violate capacities or cardinality by a factor of $2$ or more. Strengthened LPs and dependent rounding techniques have been used to give constant factor approximation with small violations in capacities/cardinality. For uniform capacities Byrka \etal ~\cite{ByrkaRybicki2015} and for non-uniform capacities Demirci \etal~\cite{Demirci2016} gave the approximations with $(1+\epsilon)$ violation in capacities whereas Li~\cite{capkmshili2014, Lisoda2016} gave the approximations using at most  $(1 + \epsilon) k$ facilities for uniform as well as non-uniform capacities. For uniform C$k$FL, a constant factor approximation, with $(2 + \epsilon)$ violation in capacities, was given by Byrka \etal~\cite{capkmByrkaFRS2013} using dependent rounding. This was followed by two constant factor approximations by Grover \etal~\cite{GroverGKP18}, one with $(1 + \epsilon)$ violation in capacities and $2$-factor violation in cardinality and the other with $(2+\epsilon)$ factor violation in capacities only. For (uniform) C$k$C, a $10$-factor approximation was given by Bar-Ilan~\etal
~\cite{bar-IlanCkC} which was improved to $6$ by Khuller and Sussman~\cite{KhullerSussmanCkC}. For non-uniform capacities, Cygan~\etal~\cite{CyganCkC} gave a large constant factor approximation, improved to $9$ by An~\etal~\cite{AnCkC} which is also also the current best.


\noindent \textbf{Connected Variants of the problem (ConFL, Con$k$M, Con$k$FL and Con$k$C):} ConFL was first introduced by Gupta~\etal~\cite{ConFLGupta2001} where they gave a $10.66$-factor approximation for the problem using LP-rounding technique. Gupta~\etal~\cite{ConFLGupta2004} described
a random facility sampling algorithm for the problem giving $9.01$-factor approximation. Primal and dual technique was first used by Swamy and Kumar~\cite{ConFLSwamy2004} to improve the factor to $8.55$ which was then improved to $6.55$ by Jung~\etal~\cite{ConFLChwa2009}. Eisenbrand~\etal~\cite{ConFLEisenbrand2010} used random sampling to improve the factor to $4$. The factor was improved to $3.19$ using similar techniques by Grandoni and Rothvob~\cite{ConFLGrandoni2011} which is also the current best for the problem. Eisenbrand~\etal~\cite{ConFLEisenbrand2010} extends their algorithm to \ConkFL giving $6.98$-factor approximation for \ConkM and \ConkFL. For Con$k$C, Ge~\etal~\cite{ge2008ConkC} gave a $3$-factor approximation when $k$ is fixed and a $6$-factor approximation for arbitrary $k$. Liang~\etal~\cite{liang2016ConkC} also gave a simpler $6$-factor approximation for the problem.

\noindent \textbf{Prize-Collecting FL (PFL)}: For PFL, a $3$-factor approximation using primal dual techniques was given by Charikar~\etal~\cite{charikar2001algorithms} which was subsequently improved to $2$ by Jain~\etal~\cite{jain2003greedy} using dual-fitting and greedy approach. Wang~\etal~\cite{Wang2015penalties} also gave a $2$-factor approximation using a combination of primal-dual and greedy technique. Later Xu and Xu~\cite{xu2005lp} gave a $2+2/e$ using LP rounding. The factor was improved  to $1.8526$ by the same authors in~\cite{xu2009improved} using a combination of primal-dual schema and local search. For linear penalties Li~\etal~\cite{LiDXX15} gave a $1.5148$-factor using LP-rounding.

\noindent \textbf{Connected and Capacitated FL (ConCFL): } A constant factor approximation for (uniform) \ConCFL with violation in capacities follows as a special case from approximation algorithm of Friggstad~\etal~\cite{FriggstadConLUFL2016} on Connected Lower and Upper Bounded Facility Location problem.

\noindent \textbf{Capacitated and Prize-Collecting variants of the problem (CPFL and CP$k$FL)}: For \CPFL, Dabas and Gupta~\cite{DabasCFLPO} gave an $O(1/\epsilon)$ approximation with $(1+\epsilon)$ factor violation in capacities using LP rounding techniques and reduction to CFL. The only true approximation is due to~Gupta and Gupta~\cite{GuptaG14}. They uses local search to obtain a $5.83$-factor and $8.532$-factor approximation for uniform and non-uniform variants respectively. For \CPkFL, Dabas and Gupta~\cite{DabasCFLPO} gave an $O(1/\epsilon^2)$-approximation algorithm with $(2+\epsilon)$-factor violation in capacities.

\noindent \textbf{ConC$k$FL, ConC$k$M, ConC$k$C, ConCPFL, ConCP$k$M, ConCP$k$FL and ConPFL}: To the best of our knowledge, no result is known for these problems in the  literature.

\subsection{Organisation of the Paper}
In Section~\ref{sectionConCFL}, we present our combining technique to obtain constant factor approximations for \ConCFL, \ConCkM, \ConCkFL and ConC$k$C. In Section \ref{sectionConCPFL}, we present our results for \ConCPFL, ConCP$k$M and ConCP$k$FL. In Sections~\ref{sectionConPFL} and ~\ref{sectionCPFL}, we give the results for \ConPFL and \CPFL respectively. Finally, we conclude with future work in Section~\ref{conclusion}.
\section{Connected and Capacitated variants of the problem}
\label{sectionConCFL}
Let $V$ be a set of locations and $G$ be a complete graph on $V$. Let $c:V \times V \rightarrow\mathbb{R}^+$ be a metric cost function. Let $\facilityset \subseteq V$ and $\cliset \subseteq V$ be the set of facilities and clients respectively. Each facility $i \in \facilityset$ has an associated opening cost $f_i$ and serving a client $j$ from facility $i$ incurs a cost $\dist{i}{j}$. In facility location problem (FL), we wish to open $\facilityset' \subseteq \facilityset$ of facilities and compute an assignment function $\phi : \clientset  \rightarrow \facilityset'$. Our goal is to minimize the total cost of opening the facilities in $\facilityset'$ and serving the clients in $\clientset$.
\begin{itemize}
    \item \textbf{ConFL:} In this variant, we wish that the set $\facilityset'$ of the opened facilities is connected by a steiner tree.
    We call the total cost of the steiner tree, that is the sum of the costs on edges of the steiner tree,   as connection cost.
%
    Our goal now is to 
    minimise the total cost of opening the facilities, serving the clients and the connection cost.
    \item \textbf{CFL:} In this variant, a facility $i$ has a bound $u_i$ on the maximum number of clients it can serve  i.e., $|\phi^{-1}(i)| \leq  u_i$ for all $i \in \facilityset'$.
    
\end{itemize}

\noindent In this section, we present a constant factor approximation for \ConCFL which is a common generalisation of ConFL and CFL, that is, we have both connectivity and capacity constraints in facility location problem. The result is obtained by combining the results of ConFL and CFL. Let $I_{\concfl}$ be an instance of ConCFL. We first make two instances $I_{\confl}$ and $I_{\cfl}$ ConFL and CFL by dropping the capacity and the connectivity constraints respectively. Note that, an optimal solution $\Opt{\concfl}$ of \ConCFL forms a feasible solution for instances $I_{\confl}$ and $I_{\cfl}$. Hence the cost of the optimal solutions $\Opt{\confl}$ and $\Opt{\cfl}$ of $I_{\confl}$ and $I_{\cfl}$ respectively are bounded. Next we solve these instance using approximation algorithms for \ConFL and CFL to obtain two approximate solutions $AS_{\confl}$ and $AS_{\cfl}$ respectively. Finally, we combine $AS_{\confl}$ and $AS_{\cfl}$
(presented in subsection~\ref{combining1}) to obtain our final approximate solution $AS_{\concfl}$. For a solution $S$ to an instance $I$, let $Cost_{I}(S)$ denote the cost of $S$, we will drop $I$ wherever it will clear from the context for brevity of notations. Refer Algorithm~\ref{ConCFL_Algo} for details of the algorithm.

\begin{algorithm}[H] 
    \setcounter{AlgoLine}{0}
	\SetAlgoLined
    \DontPrintSemicolon  
    \SetKwInOut{Input}{Input}
	\SetKwInOut{Output}{Output}
    \Input{$I_{\concfl}(\facilityset, \clientset, f, d,c,u)$}
    Create an instance $I_{\confl}(\facilityset, \clientset, f, d, c)$ of \ConFL using $I_{\concfl}$, by dropping the capacities on facilities. Since any solution to instance $I_{\concfl}$ is feasible for $I_{\confl}$ as well, we have $Cost(\Opt{\confl}) \leq Cost(\Opt{\concfl})$.\;

    Create an instance $I_{cfl}(\facilityset, \clientset,f, c, u)$ of \CFL using instance $I_{\concfl}(\facilityset, \clientset, f, d,c,u)$ of \CFL, by dropping the connection cost and the connectivity constraint. Since any solution to $I_{\concfl}$ is feasible for $I_{\cfl}$ as well, we have  $Cost(\Opt{\cfl}) \leq Cost(\Opt{\concfl})$.\;

    Obtain a $\alpha$- approximate solution $AS_{\confl}$ to instance $I_{\confl}$ using approximation algorithm for \ConFL given by Grandoni and Rothvob~\cite{ConFLGrandoni2011}.\;
    
    Obtain a $\beta$- approximate solution $AS_{\cfl}$ to instance $I_{\cfl}$ using approximation algorithm for \CFL given by Bansal~\etal~\cite{Bansal}.\;
    
    Combine solutions $AS_{\confl}$ and $AS_{\cfl}$ to obtain a  solution $AS_{\concfl}= (\facilityset', \phi)$ to $I_{\concfl}$ such that
    {$Cost(AS_{\concfl}) \leq \alpha  Cost(AS_{\confl}) + 2\beta Cost(AS_{\cfl})$}.
	\caption{Algorithm for \ConCFL}
	\label{ConCFL_Algo}
\end{algorithm}

\subsection{Combining $AS_{\confl}$ and $AS_{\cfl}$ }
\label{combining1}
In this section, we combine $AS_{\confl}$ and $AS_{\cfl}$ to obtain an approximate solution $AS_{\concfl}$ for ConCFL. Refer Figure~\ref{fig1}. For every facility $i$ opened in $AS_{\cfl}$, we open $i$ in $AS_{\concfl}$ and assign $j$ to $i$ if it was assigned to $i$ in $AS_{\cfl}$. The cost of opening $i$ and assigning $j$ to $i$ is bounded by cost of $AS_{\cfl}$. The capacities are respected because they are respected in $AS_{\cfl}$.
We connect the facilities opened in $AS_{\cfl}$ to facilities opened in $AS_{\confl}$ which are already connected via a steiner tree. 
We bound this cost (additonal)
as follows: let $i$ be a facility in $\facilityset_{\cfl}$. Let $j$ be a client served by $i$ in $AS_{\cfl}$ and by $i'$ in $AS_{\confl}$ (wlog, we assume that such a client exists for otherwise $i$ can be closed).
The cost of connecting $i$ to $i'$ is bounded by 
$\dist{i}{i'} \leq \dist{i}{j} + \dist{j}{i'}$ (by triangle inequality (refer Figure~\ref{fig2})).
Summing over all $i \in \facilityset_{\cfl}$, the cost is bounded by $S_{\confl} + S_{\cfl}$ where $S_{\confl}$ and $S_{\cfl}$ denote the service cost of $AS_{\confl}$ and $AS_{\cfl}$ respectively.
 

\begin{figure}
   \centering
    \includegraphics[width=8cm]{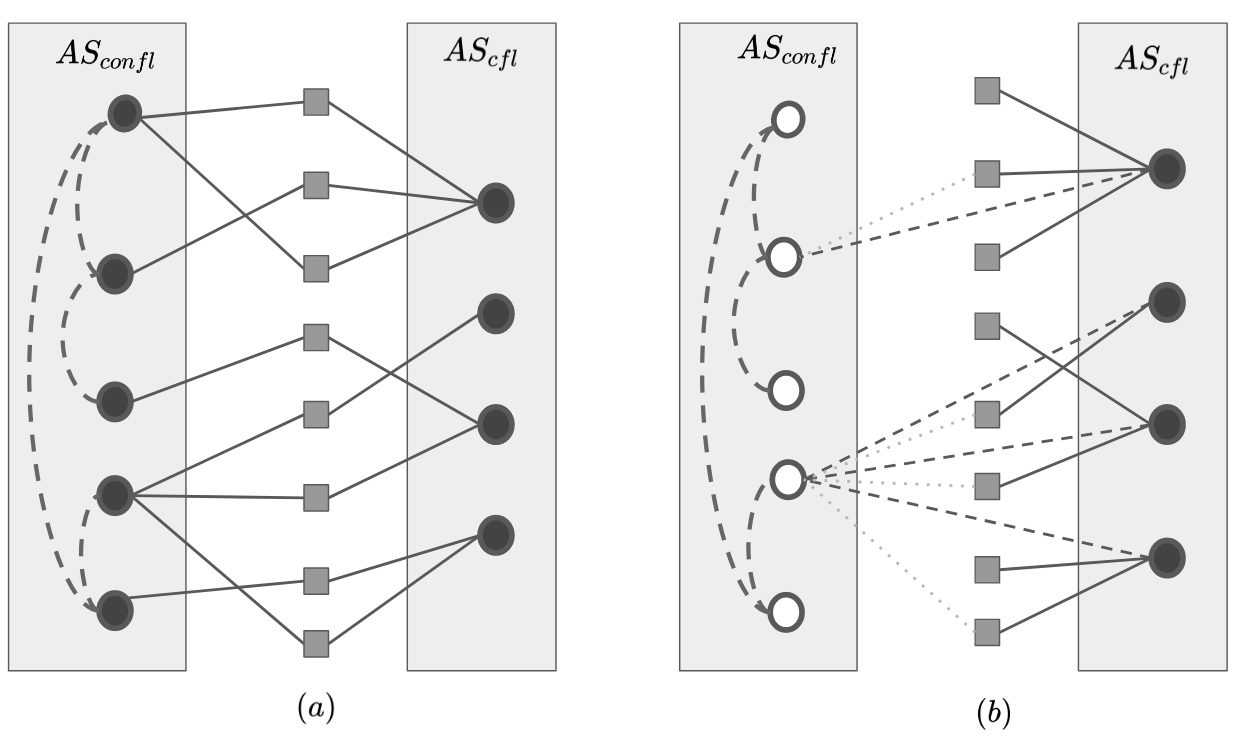}
   \caption{($a$) Represents solution $AS_{\confl}$ and $AS_{\cfl}$ with respective opened facilities(circles) and assignment of clients(squares).  ($b$) Construction of Solution $AS_{\concfl}$. The filled circles represent the opened facilities, the thick lines represent the assignment of clients and the dashed lines represents the connection of opened facilities to the steiner tree.}
    \label{fig1}
\end{figure}


\begin{figure}
    \centering
    \includegraphics[width=5cm]{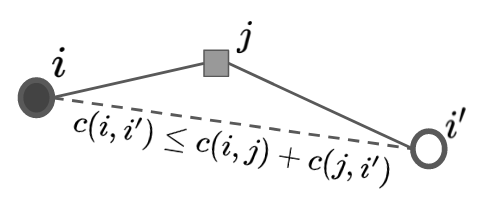}
    \caption{$c(i,i') \leq c(i,j) + c(j,i')$}
    \label{fig2}
\end{figure}

\noindent The overall cost is bounded by $Cost(AS_{\confl}$) + $2 Cost(AS_{\cfl})$ which is $\leq \alpha Cost(\Opt{\confl}) + 2\beta Cost(\Opt{\cfl})$. Using Rothvob and Grandoni~\cite{ConFLGrandoni2011} for $AS_{\confl}$ we have $\alpha=3.19$. 
For non-uniform capacities, using Bansal~\etal~\cite{Bansal} for $AS_{\cfl}$, we have $\beta=5$, which gives us a $13.19$-factor approximation. For uniform capacities, we get $9.19$ factor using $3$-factor approximation of Aggarwal~\etal~\cite{Aggarwal} for $AS_{\cfl}$.

\noindent \textbf{Introducing cardinality:} The same algorithm extends to ConC$k$M/ConC$k$FL/ConC$k$C if we take an instance of connected $k$M/$k$FL/$k$C on one side and capacitated $k$M/$k$FL/$k$C on the other side. Note that the violations in capacities/cardinality are preserved from the underlying solutions. Plugging in the current best results of Con$k$M/Con$k$FL/Con$k$C and C$k$M/C$k$FL/C$k$C, we get the results stated in Table~\ref{tab_results1}. 

\section{Connected, Capacitated and Prize-Collecting varaints of the problem}
\label{sectionConCPFL}

Prize-collecting facility location problem is a generalisation of FL where every client $j$ has an associated penalty cost $p_j$ . We wish to open $\facilityset' \subseteq \facilityset$ of facilities, select $\clipen \subseteq \cliset$ of clients to pay penalty and compute an assignment function $\phi : (\clientset \setminus \clipen) \rightarrow \facilityset'$ for the remaining clients.
Our goal is to minimize the total cost of opening the facilities in $\facilityset'$, paying penalty of clients in $\clipen$ and serving clients in $\clientset \setminus \clipen$. 

In this section, we present a constant factor approximation for \ConCPFL which is a common generalisation of ConPFL and CPFL, that is, we have both the connectivity as well as the capacity constraints in prize-collecting facility location problem. The result is obtained by combining the solutions of \ConPFL and \CPFL (obtained in Section ~\ref{sectionConPFL} and ~\ref{sectionCPFL} respectively). The idea is similar to the one presented in Section~\ref{sectionConCFL}. Let $I_{\concpfl}$ be an instance of ConCPFL. We first make two instances $I_{\conpfl}$ and $I_{\cpfl}$ of \ConPFL and CPFL by dropping the capacity and the connectivity constraints respectively. Note that, an optimal solution $\Opt{\concpfl}$ of \ConCPFL forms feasible solution for instances $I_{\conpfl}$ and $I_{\cpfl}$. Hence the cost of the optimal solutions $\Opt{\conpfl}$ and $\Opt{\cpfl}$ of $I_{\conpfl}$ and $I_{\cpfl}$ respectively are bounded. Next we solve these instance using approximation algorithms for \ConPFL and CPFL to obtain two approximate solutions $AS_{\conpfl}$ and $AS_{\cpfl}$ respectively. Finally, we combine $AS_{\conpfl}$ and $AS_{\cpfl}$ to obtain our final approximate solution $AS_{\concpfl}$. Refer Figure~\ref{fig3}. As in Section~\ref{sectionConCFL} we would like to open all the facilities that were opened in $AS_{\cpfl}$, connect them using the solution to $AS_{\conpfl}$ and serve the clients that were served in $AS_{cpfl}$ paying penalty for the rest. However, we do not know how to bound the cost of connecting the facilities in $AS_{\cpfl}$ to the facilities opened in $AS_{conpfl}$ in this case:  some clients served in $AS_{\cpfl}$ may be paying penalty in $AS_{\conpfl}$. As a result, it is possible that all the clients served by
a facility opened in $AS_{\cpfl}$ 
are paying penalty in $AS_{\conpfl}$; we do not know how to bound the cost of connecting such a facility to the facilities in $AS_{\concpfl}$ (for example, facility $i$ in Figure~\ref{fig3}).
Thus, we decide to pay penalty for clients that were paying penalty in any one of the solutions. The penalty costs of these clients are paid by $AS_{\cpfl}$ and $AS_{\conpfl}$. Next, we look at the remaining clients, open the facilities each of which is serving at least one such client in $AS_{\cpfl}$, assign clients and connect the opened facilities to the facilities in $AS_{\conpfl}$ in the same manner as in Section~\ref{sectionConCFL}. Capacities are respected as they are respected in the underlying problem. The bound on facility cost, service cost and connection cost is obtained in a similar manner as done in section~\ref{sectionConCFL}.

\noindent Using $21.32$-factor approximation for ConPFL and $5$- factor approximation for CPFL (presented in sections~\ref{sectionConPFL} and~\ref{sectionCPFL} resp.), we get a $31.32$-factor approximation for ConCPFL. The result holds for both uniform and non-uniform capacities.

\begin{figure}
    \centering
    \includegraphics[width=8cm]{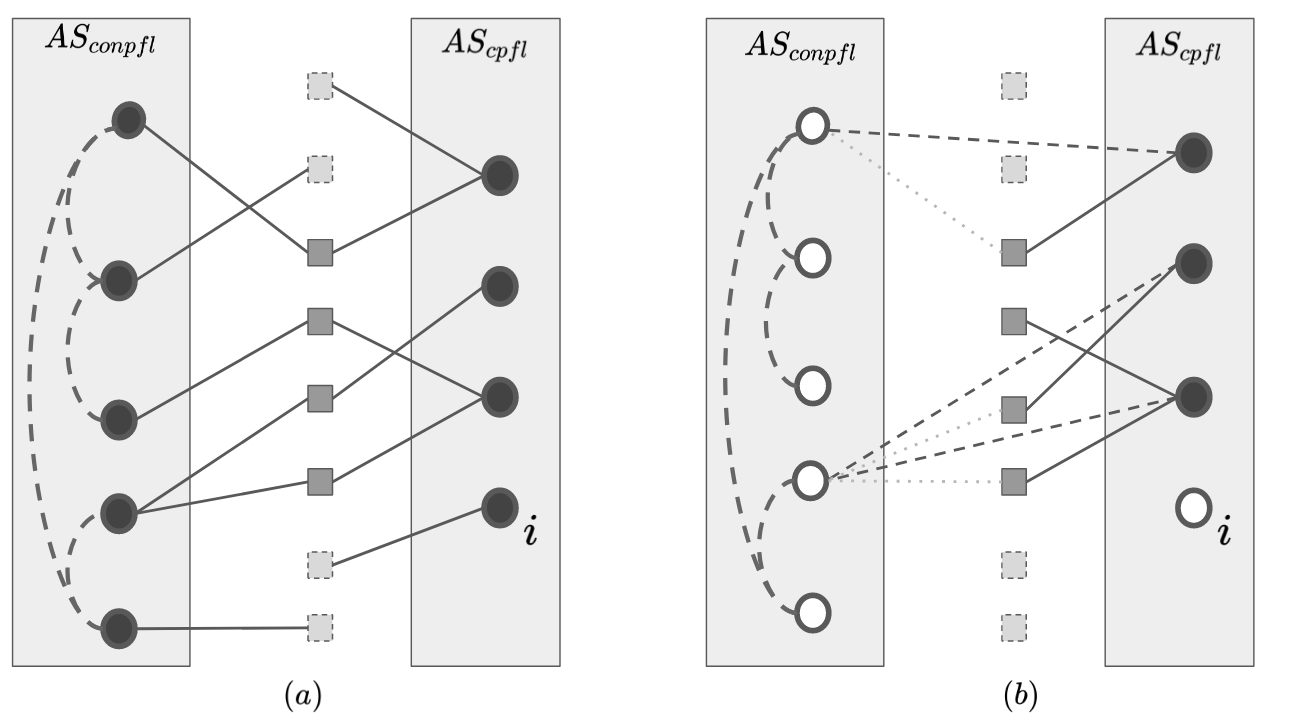}
    \caption{($a$) Represents solution $AS_{\conpfl}$ and $AS_{\cpfl}$ with respective opened facilities(circles) and assignment of clients(squares). The clients that pay penalty in one of the solutions are represented by light grey dashed boundary squares. ($b$) Construction of Solution $AS_{\concpfl}$. The filled circles represent the opened facilities, the thick lines represent the assignment of clients and the dashed lines represents the connection of opened facilities to the steiner tree.}
   \label{fig3}
\end{figure}
 
\noindent \textbf{Introducing cardinality:} The same algorithm extends to ConCP$k$M/ConCP$k$FL if we take an instance of \ConPFL on one side and CP$k$M/CP$k$FL on the other side. The violation in capacities/cardinality is preserved from underlying problem. Using $21.32$-factor approximation (presented in section~\ref{sectionConPFL}) for ConPFL and $O(1/\epsilon)$ with $(2+\epsilon)$ violation in capacities by Dabas~\etal~\cite{DabasCFLPO} for CP$k$M/CP$k$FL, we get a constant factor approximations for (uniform) ConCP$k$M/ConCP$k$FL with $(2+\epsilon)$ factor violation in capacities. Our algorithm will extend to ConCP$k$C if, in future, we have solutions for the underlying problems {\em viz} ConP$k$C and CP$k$C.


\section{Solving ConPFL via reduction to ConFL}
\label{sectionConPFL}
In this section, we present a constant factor approximation for ConPFL using LP rounding techniques and reduction to ConFL. Let us guess a vertex $v$ which is opened as a facility in the optimal solution. We can run our algorithm for all possible choices of $v$ and choose the best among them. For a non-trivial set $S \subseteq \facilityset$, let $\delta(S)$ represents the cut defined by $S$ and $\facilityset \setminus S$. ConPFL can be formulated as the following integer program:

\label{{ConPFL}}
$\text{Minimise} ~Cost(w,x,y,z) = \sum_{i \in \facilityset} w_iy_i + \sum_{i \in \facilityset}\sum_{j \in \cliset} \dist{i}{j}x_{ij} + \sum_{e \in \edgeset} \concost{e} y_e + \sum_{j \in \cliset} p_j z_j $
\begin{eqnarray}
\text{subject to} &  \sum_{i \in \facilityset} x_{ij} + z_j = 1 ~ &\forall j \in \cliset \label{const1}\\ 
&  x_{ij} \leq w_i~& \forall i \in \facilityset,~\forall j \in \cliset \label{const2}\\ 
&  w_v = 1 \label{const3}\\   
&  \sum_{i \in S} x_{ij} \leq \sum_{e \in \delta(S)}y_e~& \forall S \subseteq \facilityset,~v\notin S,~\forall j \in \cliset \label{const4}\\   
& w_i, x_{ij}, y_e, z_j \in \left\lbrace 0,1 \right\rbrace  ~&\forall~i \in \facilityset,~\forall j \in \cliset,~\forall e \in \edgeset \label{const5}
\end{eqnarray}

where variable $w_i$ denotes whether facility $i$ is open or not, $x_{ij}$ indicates if client $j$ is served by facility $i$ or not, $y_e$ denoted whether edge $e$ belongs to steiner tree or not and $z_j$ denotes if client $j$ pay penalty or not. Constraints~\ref{const1} ensure that every client is either served or pays penalty. Constraints~\ref{const2} ensure that a client is assigned to an open facility only. Constraint~\ref{const3} ensures that the guessed facility $v$ is opened. Constraints~\ref{const4} ensure that if a facility is opened then it is connected to $v$. LP-Relaxation of the problem is obtained by allowing the variables $ w_i, x_{ij}, y_e, z_j \in [0, 1]$. Let $\rho^{*} = <w^*,x^*, y^*, z^*>$ denote the optimal solution of the LP and $LP_{opt}$ denote the cost of $\rho^*$.

\subsection{Identifying the Clients that will Pay Penalty ($\cliset'$)}
We first identify a set  $\clipen$ of clients that pay penalty in our solution $AS_{\conpfl}$ via thresholding technique.  
For every client $j$, if $j$ was paying penalty to an extent of at least $1/2$, we add $j$ to $\clipen$. Formally, $\forall j$ such that $z^*_j \geq 1/2$, add $j$ to $\clipen$, set $z'_j=1$ and $x'_{ij}=0$ for every $i$. Note that, this can be done within $2$ factor of penalty cost, that is, $\sum_{j \in \clipen}p_jz'_j \leq 2 \sum_{j \in \clipen}p_j z^*_j$.

\subsection{An instance $I_{\confl}$ of \ConFL :
}
To handle the remaining clients, we create an instance $I_{\confl}$ of ConFL wherein the client set is reduced with $\cliset \setminus \clipen$. An LP for $I_{\confl}$ (LP2) can be formulated as follows:
\label{{ConFL}}
$\text{Minimise} ~Cost(w,x,y) = \sum_{i \in \facilityset} w_iy_i + \sum_{i \in \facilityset}\sum_{j \in \clientset \setminus \clientset'} \dist{i}{j}x_{ij} + \sum_{e \in \edgeset} \concost{e} y_e $
\begin{eqnarray}
\text{subject to} &  \sum_{i \in \facilityset} x_{ij} = 1 ~ &\forall j \in \clientset \setminus \clientset' \label{LP2_const1}\\ 
&  x_{ij} \leq w_i~& \forall i \in \facilityset,~\forall j \in \clientset \setminus \clientset' \label{LP2_const2}\\ 
&  w_v = 1 \label{LP2_const3}\\   
&  \sum_{i \in S} x_{ij} \leq \sum_{e \in \delta(S)}y_e~& \forall S \subseteq \facilityset,~v\notin S~\forall j \in \cliset \setminus \clientset' \label{LP2_const4}\\   
& 0 \leq w_i, x_{ij}, y_e \leq 1   ~&\forall~i \in \facilityset,~\forall j \in \clientset \setminus \clientset',~\forall e \in \edgeset \label{LP2_const5}
\end{eqnarray}


The following lemma shows existence of a feasible fractional solution to $I_{\confl}$ such that the cost is bounded within a constant factor of LP optimal ($LP_{opt}$). 

\begin{lemma}
\label{fs_confl}
There exist a fractional feasible solution $\rho' = <w', x', y'>$ such that cost is bounded by $2(\sum_{i \in \facilityset} f_iw^*_i + \sum_{i \in \facilityset}\sum_{j \in \clientset \setminus \clientset'} \dist{i}{j}x^*_{ij} + \sum_{e \in \edgeset} \concost{e} y^*_e)$.
\end{lemma}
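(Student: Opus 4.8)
The plan is to construct $\rho'$ directly from the LP optimum $\rho^{*} = <w^*,x^*,y^*,z^*>$ by scaling it up just enough to make every surviving client fully served. By the thresholding rule that defines $\clientset'$, every $j \in \clientset \setminus \clientset'$ has $z^*_j < 1/2$, hence $\sum_{i \in \facilityset} x^*_{ij} = 1 - z^*_j > 1/2$ by Constraint~\ref{const1}. So I would set $x'_{ij} = x^*_{ij}/(1-z^*_j)$ for $j \in \clientset \setminus \clientset'$, which restores $\sum_i x'_{ij} = 1$, and set $w'_i = \min\{1,\,2w^*_i\}$ and $y'_e = \min\{1,\,2y^*_e\}$. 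The capping at $1$ is present only to keep the variables inside $[0,1]$; the factor $2$ is a uniform overestimate of the per-client factor $1/(1-z^*_j)$, which is what lets the same $w'$ and $y'$ serve all clients simultaneously.

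Then I would verify the constraints of LP2. Constraint~\ref{LP2_const1} holds by the choice of scaling. For Constraint~\ref{LP2_const2}: if $w^*_i \ge 1/2$ then $w'_i = 1 \ge x'_{ij}$, since each $x'_{ij} \le 1$ (because $x^*_{ij} \le \sum_{i'} x^*_{i'j} = 1 - z^*_j$); and if $w^*_i < 1/2$ then $w'_i = 2w^*_i \ge \tfrac{1}{1-z^*_j}x^*_{ij} = x'_{ij}$, using $x^*_{ij} \le w^*_i$ (Constraint~\ref{const2}) and $\tfrac{1}{1-z^*_j} < 2$. Constraint~\ref{LP2_const3} holds because $w^*_v = 1$ forces $w'_v = 1$. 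For the connectivity cut Constraint~\ref{LP2_const4}, I would start from $\sum_{i \in S} x'_{ij} = \tfrac{1}{1-z^*_j}\sum_{i\in S}x^*_{ij} \le \tfrac{1}{1-z^*_j}\sum_{e\in\delta(S)}y^*_e$ (by Constraint~\ref{const4}) and split into two cases: if $2y^*_e \le 1$ for every $e \in \delta(S)$ then $\sum_{e\in\delta(S)}y'_e = 2\sum_{e\in\delta(S)}y^*_e$, which dominates the previous quantity since $\tfrac{1}{1-z^*_j}<2$; otherwise $y'_{e} = 1$ for some $e\in\delta(S)$, so $\sum_{e\in\delta(S)}y'_e \ge 1 \ge \sum_{i\in S}x'_{ij}$, the last inequality because $\sum_i x'_{ij}=1$ with all terms nonnegative. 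Constraint~\ref{LP2_const5} is immediate from the definitions.

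For the cost bound I would simply use $x'_{ij} \le 2x^*_{ij}$ (from $\tfrac{1}{1-z^*_j}<2$), $w'_i \le 2w^*_i$ and $y'_e \le 2y^*_e$ (from the $\min$ in the definitions), and add the three groups of terms weighted by $f_i$, $\dist{i}{j}$ and $\concost{e}$ respectively; this yields exactly $Cost(\rho') \le 2\big(\sum_{i\in\facilityset}f_iw^*_i + \sum_{i\in\facilityset}\sum_{j\in\clientset\setminus\clientset'}\dist{i}{j}x^*_{ij} + \sum_{e\in\edgeset}\concost{e}y^*_e\big)$, as claimed. There is no genuine obstacle here; the only points that need a moment's care are (i) replacing the client-dependent scaling factor $1/(1-z^*_j)$ by the uniform bound $2$ before applying it to the shared facility and edge variables, and (ii) checking that truncating $w'$ and $y'$ at $1$ breaks neither the assignment constraint nor the cut constraint, which the two-case argument above handles.
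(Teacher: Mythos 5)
Your proposal is correct and follows essentially the same route as the paper: rescale each surviving client's assignment by $1/(1-z^*_j)=1/\sum_i x^*_{ij}$ to restore full service, lift $w$ and $y$ by at most a factor of $2$ (capped at $1$), check the four constraint families, and read off the factor-$2$ cost bound. Your case analysis for the cut constraint and your uniform $\min\{1,2w^*_i\}$ scaling are in fact slightly cleaner than the paper's write-up (which also has an evident typo, $\max$ for $\min$, in its definition of $y'_e$), but the argument is the same.
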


\begin{proof}

Recall that, every client $j \in \cliset \setminus \cliset'$ is served to an extent of at least $1/2$. For all $i \in \facilityset : x^*_{ij}> 0 $, we raise the assignment of $j$ on $i$ proportionately so that $j$ is fully served. Variables $w$ and $y$ are raised accordingly to satisfy constraints~(\ref{LP2_const2}) and constraints~(\ref{LP2_const4}) respectively. Formally, $\forall j \in \cliset \setminus \cliset'$, $\forall i \in \facilityset$ and $\forall e \in \edgeset$, set $(i)$ ${x}'_{ij} = x^*_{ij}/\sum_{i \in \facilityset}x^*_{ij}$, $(ii)$ $w'_i  = \min \{1, w^*_i \cdot \max_{j \in \cliset \setminus \cliset' : x^*_{ij} > 0 } \{{x}'_{ij}/x^*_{ij} \}\}$ and $(iii)$ $y'_e = \max \{ 1, 2y^*_e \}$. 

Next, we will see that $<w', x', y'>$ is a feasible solution to LP of ConFL.
\begin{enumerate}
    \item Constraints~\ref{LP2_const1} are satisfied because for all $j \in \clientset \setminus \clientset'$, $\sum_{i \in \facilityset} x'_{ij} = \sum_{i \in \facilityset} (x^*_{ij}/\sum_{i \in \facilityset}x^*_{ij}) = 1$.
    \item Constraint~\ref{LP2_const2} is satisfied because $w'_v = w^*_v = 1$.
    \item Constraints~\ref{LP2_const3} holds trivially if $w'_i = 1$ as $x'_{ij} \leq 1$. Otherwise, for $x^*_{ij} > 0$, $w'_i \geq (w^*_i/x^*_{ij}) x'_{ij} \geq x'_{ij}$ as $w^*_i/x^*_{ij} \geq 1$
    \item For $y'_e = 1$, constraints~\ref{LP2_const4} holds because $\sum_{i \in S} x'_{ij} \leq 1$. Otherwise, $\sum_{e \in \delta(S)}y'_e = 2\sum_{e \in \delta(S)} y^*_2 \geq 2\sum_{i \in S} x^*_{ij} \geq \frac{\sum_{i \in S} x^*_{ij}}{\sum_{i \in \facilityset} x^*_{ij}} = \sum_{i \in S}x'_{ij}$ where the last inequality follows because $\sum_{i \in \facilityset}x^*_{ij} \geq 1/2$ for all $j \in \clientset \setminus \clientset'$.
\end{enumerate}

\textbf{Cost Bound:} Next we will bound the cost of our feasible solution $\rho'$. Note that, ($i$) $x'_{ij} = x^*_{ij}/\sum_{i \in \facilityset} x^*_{ij} \leq 2 x^*_{ij}$, $(ii)$ $w'_i \leq 2 w^*_i$ and ($iii$) $y'_e \leq 2 y^*_e$. Therefore, $Cost(w',x',y') = \sum_{i \in \facilityset} w'_iy_i + \sum_{i \in \facilityset}\sum_{j \in \clientset \setminus \clientset'} \dist{i}{j}x'_{ij} + \sum_{e \in \edgeset} \concost{e} y'_e  \leq 2 ( \sum_{i \in \facilityset} w^*_iy_i + \sum_{i \in \facilityset}\sum_{j \in \clientset \setminus \clientset'} \dist{i}{j}x^*_{ij} + \sum_{e \in \edgeset} \concost{e} y^*_e)$.
\end{proof}


Next, we solve the instance $I_{\confl}$ using any algorithm that uses LP optimal as a lower bound on the optimal cost. In particular, we use approximation algorithm by Gupta~\etal~\cite{ConFLGupta2001} as black-box to obtain an approximate solution $AS_{\confl}$.  
The solution $AS_{\confl}$ along with clients in $\clipen$ forms an approximate solution $AS_{\conpfl}$ for ConPFL. The final cost bound is as follows: $ Cost(AS_{\conpfl}) = Cost(AS_{\confl}) + \sum_{j \in \clipen} p_j z'_j \leq \beta Cost(\Opt{\confl}) + 2 \sum_{j \in \clipen} p_j z^*_j \leq 2 \beta (\sum_{i \in \facilityset} w^*_iy_i + \sum_{i \in \facilityset}\sum_{j \in \clientset \setminus \clipen} \dist{i}{j}x^*_{ij} + \sum_{e \in \edgeset} \concost{e} y^*_e) + 2 \sum_{j \in \clipen} p_j z^*_j = 2 \beta LP_{opt}$. Using Gupta~\etal~\cite{ConFLGupta2001}, we have $\alpha=10.66$ which gives us $21.32$-factor approximation.
\section{Solving CPFL via reduction to CFL}
\label{sectionCPFL}
In this section, we present a $5$-factor approximation for \CPFL by reducing it to CFL.
%
To create an instance $I_{\cfl}$
of \CFL from an instance $I_{\cpfl}$ of CPFL, we create a set of dummy facilities $\dummyset$ as follows: for every client we create a collocated dummy facility having facility opening cost equal to the penalty cost of the client and $1$ unit of capacity. That is, $\forall j \in \clientset$, add a facility $i_j$ to $\dummyset$ such that $\dist{j}{i_j}=0$, $f_{i_j} = p_j$ and $u_{i_j}=1$. Open a facility $i$ if it is opened in $\Opt{\cfl}$ and assign client $j$ to it if it was assigned to $i$ in $\Opt{\cfl}$ where $\Opt{\cfl}$ is an optimal solution for $I_{\cfl}$. For a client $j$ for which $\Opt{\cfl}$ pays the penalty, we open the facility $i_j$ collocated with $j$ and assign $j$ to it. Clearly the cost of the solution is bounded by the cost of the optimal ($\Opt{cpfl}$).

We obtain an approximate solution $AS_{\cfl}$ to $I_{\cfl}$ using a $\beta$- approximation to CFL.
Next, we create an approximate solution for $I_{\cpfl}$ from $AS_{\cfl}$: open a true facility $i$ if it is opened in $AS_{\cfl}$ and assign client $j$ to it if it was assigned to $i$ in $AS_{\cfl}$; if a dummy facility $i_j$ collocated with a client $j$ is opened in $AS_{\cfl}$ we pay penalty for $j$ in our solution.  Since $u_{i_j} = 1$ wlog we may assume that if $i_j$ is opened in $AS_{\cfl}$ then $j$ is assigned to $i_j$ (for if this is not the case and $j$ is assigned to a facility $i'$ and $j'$ is assigned to $i_j$, we can obtain another solution by assigning $j$ to $i_j$ and $j'$ to $i'$ without increasing the cost). Clearly, the cost of the solution so obtained is bounded by the cost of $AS_{\cfl}$. Using result of Bansal~\etal~\cite{Bansal} to obtain $AS_{\cfl}$, we have a $5$-factor approximation for the problem.

\section{Conclusion and Future Work}
\label{conclusion}
In this paper, we presented constant factor approximations for connected, capacitated and prize-collecting facility location problem and variants. The approximations were obtained using the solutions of the underlying problems as black-box. 

We successfully reduced ConPFL and CPFL to ConFL and CFL respectively. Obtaining similar reductions when we have cardinality constraints is an interesting open problem. 

Though we feel that reducing ConP$k$C and CP$k$C to Con$k$C and C$k$C respectively would be challenging, the techniques of Con$k$C~\cite{liang2016ConkC} and C$k$C~\cite{AnCkC} should be extendable (modifiable) to accommodate penalties. Then our technique in section~\ref{sectionConCPFL} can be used to give an approximation for ConCP$k$C as depicted in Figure 2.

This paper used reductions and combining techniques to introduce notoriously hard capacity constraints to connected- and prize-collecting variants of some important classical problems. It will interesting to see similar results for other constraints, for example, the outlier constraint where you are allowed to leave some specified number of clients unserved. Note that our technique for prize-collecting variants gives similar results for outlier version with $2$-factor violation in outliers; the challenge would be to get rid of this violation.

\bibliography{ref}

\appendix

\end{document}